\documentclass[11pt]{article}

\usepackage{amsmath,amssymb}
\usepackage{amsthm} 
\usepackage{newtxtext}
\usepackage{newtxmath}
\usepackage[margin=1in]{geometry}
\theoremstyle{definition}

\theoremstyle{plain}
\newtheorem{theorem}{Theorem}

\usepackage{tikz}
\usetikzlibrary{arrows.meta, positioning}
\newcommand{\wed}{\wedge^{2}}
\newcommand{\Gr}{\mathrm{Gr}}

\begin{document}

\begin{center}
\textbf{\Large Some intuition for why cooperative systems ``look 1-dimensional''\\
\medskip
  and 2-cooperative systems ``look 2-dimensional''}

\medskip
Eduardo Sontag, Northeastern University, July 2026

\end{center}

\medskip
\begin{center}
\begin{minipage}{0.86\textwidth}
\small
\textbf{Abstract.} It is known that cooperative systems, and more
generally systems monotone with respect to cones, behave under
appropriate irreducibility conditions like one-dimensional systems, as
evidenced by results such as Hirsch's generic convergence theorem. It
is also known, following work of S\'anchez and others, that
two-cooperative systems, those preserving a generally nonconvex cone of
rank two (tested through the diminishing of sign
variations), behave like two-dimensional systems, as evidenced by
Poincar\'e--Bendixson-type theorems. In these notes I attempt to give
some geometric intuition for these dimensionality reductions, based on
Birkhoff--Hilbert contractions of the projective metric on a positive
cone.
\end{minipage}
\end{center}
\medskip

These notes began as a way for me to understand intuitively why
cooperative systems behave like 1-dimensional systems and
2-cooperative systems behave, in many ways, like 2-dimensional systems.
The theory justifying these assertions goes back to the work of Hirsch
and of Hal Smith on monotone systems \cite{HirschSmith2005, Smith1995},
as well as work of S\'anchez and many others on $k$-monotone systems
and, in particular, versions of the Poincar\'e--Bendixson theorem
\cite{Smith1979, Sanchez2009, FengWangWu2021, FengWangWu2022,
WeissMargaliot2021, KatzGiordanoMargaliot2025}. A closely related line,
developed by Forni and Sepulchre and by Mostajeran and Sepulchre
\cite{ForniSepulchre2016, MostajeranSepulchre2018}, casts these ideas
geometrically as \emph{differential positivity}: the linearised flow
infinitesimally contracts a cone field of rank $k$, and the
$k$-dimensional dominant distribution shapes $k$-dimensional integral
submanifold attractors. My emphasis here is different, and more
elementary. Rather than the differential-geometric machinery of cone
fields and integral manifolds, I take the Birkhoff--Hilbert
projective-metric contraction \cite{Birkhoff1957, Birkhoff1967, Hopf1963}
as the mechanism: strict positivity does not merely map a cone inside
itself, it contracts an explicit metric at a definite rate. This yields
a quantitative, finite-time statement of the dimensional collapse, with
no compactness, integrability, or attractor hypothesis, and I think it
is one alternative way to provide the intuition. Having written the
notes, I decided they were worth sharing: the point of view seems to me
an illuminating one. Whether the viewpoint itself is novel I do not
know; I have simply not found this approach elsewhere.

To set the stage, I will first discuss standard cooperative systems.
In that case, generic variations of a given initial state are
approximately mapped, at each time $\tau$, into the line spanned by
the tangent to the trajectory starting from that state. A parallel
discussion for 2-cooperative systems then follows.

A few classical tools recur throughout. The main one is \emph{Birkhoff's theorem} \cite{Birkhoff1957, Birkhoff1967, Hopf1963}, that a linear
map sending a cone into its interior contracts the Hilbert projective
metric on that cone; this is the source of every ``alignment''
statement below. Alongside it we use the \emph{Perron--Frobenius}
theorem for Metzler (essentially nonnegative) matrices, \emph{compound
matrices} and exterior powers to move from vectors to 2-planes, and, in
the final section, the language of \emph{exponential separation}
(dominated splitting) to state precisely what the alignment amounts to.

The rest of the note is organized as follows. The present section fixes
definitions and assumptions and records the contraction rate we rely
on. Section~1 treats the 1-positive (cooperative) case and the sense in
which the dynamics look one-dimensional; Section~2 treats the 2-positive
case and the two-dimensional envelope. Section~3 states the precise
skeleton behind both, and an appendix supplies the explicit
finite-window estimate underlying the contraction rate.

Although the machinery is developed for general $k$, we emphasize
$k=2$ because strongly 2-cooperative systems satisfy a
Poincar\'e--Bendixson-type property: under the standard compactness and
strong-positivity hypotheses, a compact $\omega$-limit set containing
no equilibrium is a periodic orbit \cite{Smith1979, Sanchez2009, FengWangWu2021, FengWangWu2022}.
This provides one motivation for saying that their asymptotic geometry
can look two-dimensional, and it constrains the structure of compact
attractors \cite{MargaliotWuSontag2025}. The alignment argument
developed here is not itself a proof that every bounded trajectory
converges to either an equilibrium or a periodic orbit; it is an
account of why the tangent
geometry along a trajectory collapses toward a dominant
two-dimensional subspace.

\textbf{Standing definitions and assumptions}

Throughout, we consider a differential equation system $\dot x = f(x)$
evolving in an open subset of Euclidean space, with $f$ at least
continuously differentiable, and assume that solutions of interest are
defined for all $t\geq0$.

Fix a reference
trajectory $x(t) = \varphi_t(x_0)$. A system $\dot x = f(x)$ is
\emph{$k$-cooperative} if the $k$th additive compound $Df(x)^{[k]}$ is
Metzler for every $x$; the case $k=1$ is the usual cooperative/Metzler
condition \cite{WeissMargaliot2021, KatzGiordanoMargaliot2025}. Equivalently, the linearised flow $D\varphi_\tau$ is a
\emph{$k$-positive} cocycle: its $k$th multiplicative compound
$\wedge^k D\varphi_\tau$ maps the positive orthant of $\wedge^k
\mathbb R^n$ (in the standard Pl\"ucker basis) into itself. The
additive compound of the Jacobian and the multiplicative compound of
the flow are the infinitesimal and integrated versions of one object:
$\wedge^k D\varphi_\tau$ is exactly the transition matrix of the linear
time-varying system generated by $Df(x(t))^{[k]}$, so ``$Df(x(t))^{[k]}$
Metzler for a.e.\ $t$'' is equivalent to ``$\wedge^k\Phi(t,s)$ positive
for all $t\ge s$'', where $\Phi(t,s)$ is the two-time transition
operator introduced in Section~3 (positivity of the maps
$\wedge^k\Phi(\tau,0)$ from a single fixed initial time alone would not
by itself force the instantaneous generator to be Metzler).

The cocycle is \emph{uniformly $k$-positive} along the trajectory if
this Metzler generator is uniformly so, in the following sense. Write
$B(t) := Df(x(t))^{[k]}$, an $N\times N$ matrix with $N=\binom{n}{k}$.
There are constants $0 < \delta \le \Delta < \infty$ and $\Lambda <
\infty$, independent of $t$, and a \emph{fixed} ($t$-independent)
strongly connected directed graph $G$ on the $N$ coordinates of
$\wedge^k\mathbb R^n$, such that for a.e.\ $t$ along the trajectory:
$B_{ij}(t) \in [\delta,\Delta]$ for every edge $j\to i$ of $G$; every
other off-diagonal entry satisfies $B_{ij}(t) \ge 0$ (Metzler); and all
entries satisfy $|B_{ij}(t)| \le \Lambda$ (the diagonal included). Note
this is a \emph{sparse} hypothesis: the band $[\delta,\Delta]$ is
required only on the edges of $G$, not on every nonnegative off-diagonal
entry, so $G$ need not be complete. These requirements, the lower band
on a fixed connected support, the global bound, and persistent
connectivity, are exactly what the finite-window estimate in the
appendix uses; pointwise irreducibility for a.e.\ $t$ on its own would
not suffice, since the couplings could weaken or the connectivity
pattern change from instant to instant. Imposing the band on the
generator (rather than on $\wedge^k D\varphi_\tau$ itself, whose entries
grow or decay with $\tau$ and so admit no $\tau$-independent band) is
what yields a contraction of the Hilbert projective metric on the
relevant positive cone: the projectivized positive cone, contracting
toward a ray for $k=1$, and the totally positive region
of the Grassmannian $\Gr(k,n)$ (the part represented inside the positive
compound cone) for $k=2$, at a rate $c > 0$ uniform in $\tau$; the
extension to arbitrary transverse planes is the exponential separation
of Section~3, and the dependence of the rate on the band, the global
bound, and the connectivity of the generator is made precise in the next
remark. Composed over $[0,\tau]$, this gives an exponential contraction
$e^{-c\tau}$: the continuous-time limit of the discrete Birkhoff
argument for finite products of uniformly banded positive matrices.

The strong connectivity built into the definition above is what we rely
on twice below (for $k=1$ in Section~1 and $k=2$ in Section~2): at $t=0$
it lets Perron--Frobenius, applied once to the frozen matrix at $x_0$,
deliver a simple dominant eigenvector or eigen-bivector for the
illustrative anchor; along the trajectory it is what makes the Birkhoff
contraction strict, at a rate uniform in $\tau$, as the next remark
records.

\textbf{The Birkhoff rate, precisely.} The claim that the uniform band
produces an exponential rate needs one further ingredient, and it is
the reason irreducibility is imposed along the whole trajectory rather
than only at $x_0$. Over an infinitesimal step the compound flow is
\[
   \exp\!\big(h\,Df(x(t))^{[k]}\big) = I + h\,Df(x(t))^{[k]} + O(h^2),
\]
which sits close to the identity: its off-diagonal entries are $O(h)$
where $Df(x(t))^{[k]}$ is nonzero, and higher order or exactly zero
elsewhere. Such a matrix maps the positive cone into itself but meets
its boundary, so its Hilbert-metric projective diameter is infinite and
it is only non-expansive. A strict contraction appears only once the
transition matrix over a finite window maps the cone into its interior,
that is, becomes entrywise strictly positive.

The fixed strongly connected support, together with the uniform lower
bound, supplies this (pointwise irreducibility for a.e.\ $t$ alone would
not). Suppose $Df(x(t))^{[k]}$ is
Metzler for a.e.\ $t$, with the off-diagonal entries corresponding to
the edges of a fixed strongly connected graph $G$ lying in
$[\delta,\Delta]$, on the $\binom{n}{k} =: N$ coordinates of $\wedge^k
\mathbb R^n$. Let $\Psi(t_2,t_1)$ denote
the transition matrix of the induced compound system $\dot\eta =
Df(x(t))^{[k]}\,\eta$, so that $\wedge^k D\varphi_\tau(x_0) =
\Psi(\tau,0)$. Then there is a window length $h_0 > 0$ for which
$\Psi(t+h_0,t)$ is entrywise strictly positive for every $t$, with
entries bounded above in terms of $\Lambda$, $N$, and $h_0$ and bounded below
in terms of $\delta$, $h_0$, and the graph diameter $r \le N-1$: each
entry receives a contribution of order $(\delta h_0)^r/r!$
from a connecting path of length at most $r$.

For an entrywise positive matrix $P$, Birkhoff's theorem bounds the
contraction coefficient in the Hilbert metric by
\[
   \kappa(P) \le \tanh\!\Big(\tfrac14\,\operatorname{diam}(P)\Big),
   \qquad
   \operatorname{diam}(P) = \max_{i,j,p,q}\log\frac{P_{ip}\,P_{jq}}{P_{iq}\,P_{jp}},
\]
and the entry bounds just stated make the projective diameter of the
window map $\Psi(t+h_0,t)$ finite and uniform in $t$; writing $D_0$ for
its supremum over $t$, the coefficient $\kappa_0 := \tanh(D_0/4) < 1$ is
a uniform contraction bound. Composing over
$\lfloor \tau/h_0\rfloor$ windows,
\[
   d_H\!\big(\wedge^k D\varphi_\tau\,\xi,\ \wedge^k D\varphi_\tau\,\eta\big)
   \;\le\; \kappa_0^{\lfloor \tau/h_0\rfloor}\, d_H(\xi,\eta)
   \;\le\; \kappa_0^{-1}\,e^{-c\tau}\, d_H(\xi,\eta),
\]
which is exponential decay at rate $c = -h_0^{-1}\log\kappa_0
> 0$ (the prefactor $\kappa_0^{-1}$ absorbs the fractional part of
$\tau/h_0$). So the rate depends on the band $[\delta,\Delta]$ and the
global bound $\Lambda$ together with the connectivity: the lower band
$\delta$ and the global bound $\Lambda$ control the sizes of the
entries (the explicit upper estimate in the appendix uses $\Lambda$),
while the fixed connected support, through the graph diameter $r$ and
the window $h_0$, controls how long
they take to become positive. The band alone does not determine it.

\textbf{What this note does and does not prove.} The reasoning below is
deliberately heuristic, so let me be clear about what these claims
actually mean. The alignment statements (``generic tangent
vectors align with a dominant direction'', ``generic 2-planes align
with a dominant 2-plane'') are statements about \emph{directions in the
tangent bundle}. Their clean rigorous form, given in the final section,
is a \emph{finite-time} one: once $T\ge h_0$, the images of interior
cone directions under the compound cocycle have collapsed to within
$e^{-cT}$ of one another (Theorem~\ref{thm:align}), a statement that
rests only on the Birkhoff contraction of the appendix and needs no
compactness, no limiting bundle, and no external input. A canonical
\emph{limiting} object as $T\to\infty$ is an optional refinement,
requiring more (a compact invariant orbit closure), as we will remark.
Birkhoff contraction on the positive cone is
the engine throughout, and it is a statement about in-cone directions,
not, by itself, about arbitrary sign-changing tangent vectors: the
Hilbert metric lives on the cone interior, and passing to a general
tangent vector needs a transversality (no-cancellation) condition,
discussed in the final section. Two further points to keep in mind,
expanded where they arise: the frozen-Jacobian anchors $L(\tau)$ and
$\Pi_{\mathrm{anc}}(\tau)$ introduced below are concrete illustrative
devices, serving as in-cone reference markers, not claimed to be any
canonical invariant bundle; and passing from these tangent-bundle
statements to statements about \emph{finite} nonlinear perturbations at
\emph{large} time is a separate, non-uniform step, taken up in the
final section. A reader who wants only the precise skeleton may read
Theorem~\ref{thm:align} first and treat the intervening sections as
motivation.

\section{Positive cocycle alignment makes the dynamics look one-dimensional}

Here is the picture we are after. A generic tangent vector, transported
by a uniformly positive cocycle, becomes asymptotically aligned with a
single dominant direction. On a non-equilibrium trajectory, the
velocity field $\dot x(\tau)$ is a convenient concrete representative of
that dominant direction (see Alignment below for the precise sense in
which it represents it), and this forces perturbed initial conditions
to end up, to first order and asymptotically in $\tau$, close to the
reference orbit at a slightly shifted time. That is the sense in which
the dynamics collapse to one dimension. The generic long-time
convergence behind this picture is the subject of the theory of
monotone (cooperative) dynamical systems \cite{HirschSmith2005, Smith1995}.

\textbf{Setup.} Consider $\dot x = f(x)$ with smooth $f$, and let
$\varphi_t$ be its flow. Fix a reference trajectory $x(t)=\varphi_t(x_0)$
and let $D\varphi_\tau(x_0)$ denote the derivative of the time-$\tau$ map at
$x_0$, i.e.\ the linearised (variational) propagator. A perturbation
$v_0\in T_{x_0}\mathbb R^n$ evolves as $v(t) = D\varphi_t(x_0)\, v_0$,
which is the solution of the variational equation
$\dot v = Df\bigl(x(t)\bigr)\,v$ with $v(0)=v_0$.

The starting point is a small but useful observation: the velocity
field along the reference trajectory, $\dot x(t) = f(x(t))$, is itself a
solution of the variational equation: differentiating $\dot x = f(x)$ in $t$ gives
$\ddot x = Df(x)\,\dot x$. In cocycle language,
\[
   \dot x(\tau) \;=\; D\varphi_\tau(x_0)\,\dot x(0),
\]
so the velocity is one specific tangent vector transported by the
cocycle.

\textbf{Alignment.} Uniform positivity of the linearised cocycle makes
$D\varphi_\tau(x_0)$ contract the Hilbert projective metric on the
tangent cone. For the asymptotic interpretation in this section, assume
either the compact-base hypotheses of the refinement in Section~3 or,
directly, the existence of the exponentially separated splitting stated
there; under that assumption the cocycle admits
a dominant line bundle $E_1(\tau)$ and a complementary slow bundle
$F_1(\tau)$ of dimension $n-1$, so that every
$v_0 \notin F_1(0)$ has image direction converging to $E_1(\tau)$
exponentially fast. (The unconditional, finite-time version of what
follows, needing no such assumption, is Theorem~\ref{thm:align}.)
Informally, then: images of vectors transverse to the slow
bundle become well-aligned with each other as $\tau$ grows; ``generic''
means avoiding the measure-zero slow bundle.

It is convenient to have a concrete anchor to point at, and we can name
one using only data at $x_0$. Since
$A_0 := Df(x_0)$ is Metzler and irreducible, Perron--Frobenius applied
to $A_0$ (or to $A_0 + cI$, for $c$ large enough to make it entrywise
nonnegative) gives a simple dominant eigenvalue and a strictly
positive eigenvector $p_0$, unique up to scale. Define the transported
line
\[
   L(\tau) \;:=\; D\varphi_\tau(x_0)\,\mathrm{span}(p_0),
\]
exactly, at every $\tau$: we are simply applying the (exactly known)
linear map $D\varphi_\tau(x_0)$ to the fixed line $\mathrm{span}(p_0)$.
($L(\tau)$ itself still drifts with $\tau$, needing not converge to a
fixed limiting line as $\tau\to\infty$; it only keeps pace with whatever
other interior cone direction's image it is compared against.)

Observe that this anchor, as discussed further in Section~3, is
built from frozen data: $p_0$ is the
Perron eigenvector of the \emph{frozen} matrix $A_0$, and the transported
line $L(\tau)$ need not be the dynamical dominant bundle $E_1(\tau)$.
They do not agree exactly but become asymptotically aligned:
$\angle\big(L(\tau), E_1(\tau)\big) \le C e^{-\gamma\tau}$, provided
$p_0 \notin F_1(0)$. Under the Poláčik--Tereščák splitting of Section~3
this transversality is automatic, since $p_0$ is strictly positive and
the slow hyperplane $F_1(0)$ contains no nonzero positive vector; it is
an extra assumption only if one starts instead from an arbitrary,
separately-assumed dominated splitting. Either way,
$L(\tau)$ is used here as a concrete,
computable stand-in for the dominant ray, not as a claim about the
invariant bundle.

With that understood: so long as $\dot x(0) \notin F_1(0)$ (the same
genericity, now applied to the velocity), its image
$\dot x(\tau) = D\varphi_\tau(x_0)\dot x(0)$ becomes exponentially
well-aligned with $E_1(\tau)$, hence with $L(\tau)$ and with the image
of every other generic vector. So $\dot x(\tau)$ is a convenient,
concrete representative of the dominant ray. For any generic $v_0$ there
is then a scalar $\alpha(\tau;v_0)$ (with alignment error decaying
exponentially in $\tau$) such that
\[
   \frac{D\varphi_\tau(x_0)\, v_0}{\lvert D\varphi_\tau(x_0)\, v_0\rvert}
   \;\approx\;
   \pm\frac{\dot x(\tau)}{\lvert\dot x(\tau)\rvert},
   \qquad\text{equivalently}\qquad
   D\varphi_\tau(x_0)\, v_0 \;\approx\; \alpha(\tau;v_0)\,\dot x(\tau),
\]
the alignment being a statement about \emph{directions}; the magnitude
$\alpha(\tau;v_0)$ is a separate matter, taken up in the caveats.

\textbf{The geometric consequence.} Now take a nearby initial state
$y_0 = x_0 + v_0$ and fix the time $\tau$. Two approximations are in
play, and they must be applied in the right order. First, for fixed
$\tau$ and small $\lvert v_0\rvert$, the linearisation
\[
   y(\tau) - x(\tau)
   \;=\; D\varphi_\tau(x_0)\, v_0 + o(\lvert v_0\rvert)
   \;\approx\; \alpha(\tau;v_0)\, \dot x(\tau),
\]
where the second step uses the directional alignment above and
$\alpha(\tau;v_0)$ is the resulting scalar. Second, \emph{provided
$\alpha(\tau;v_0)$ is small}, Taylor-expanding the reference orbit gives
$x(\tau+\alpha) \approx x(\tau) + \alpha\,\dot x(\tau)$, so
\[
   y(\tau) \;\approx\; x\bigl(\tau + \alpha(\tau;v_0)\bigr).
\]
In words: to leading order the perturbed state has landed back on the
reference orbit itself, only at a slightly shifted time $\alpha(\tau;v_0)$. This
is a local, fixed-$\tau$, small-perturbation statement: the order of
limits is $v_0 \to 0$ first (for the linearisation) with $\alpha$
kept small (for the Taylor step), and only then does one ask how the
tangent direction behaves as $\tau\to\infty$. The nonlinear remainder
$o(\lvert v_0\rvert)$ is not uniform in $\tau$, so the two limits do not
commute; Section~3 says what survives.


\begin{center}
\begin{tikzpicture}[
    x=1cm, y=1cm,
    every node/.style={font=\small},
    dot/.style={circle, fill=black, inner sep=1.3pt},
    traj/.style={very thick, red!70!black},
    tangent/.style={-{Latex[length=2.5mm]}, thick, blue!70!black},
    pert/.style={-{Latex[length=2.5mm]}, thick, blue!70!black}
  ]

  \draw[traj] (0,0) .. controls (0.983,0.688) and (2.0,-0.9) .. (3.0,-0.7);
  \draw[traj] (3.0,-0.7) .. controls (4.3,-0.44) and (5.7,0.6) .. (7.0,0.6);
  \draw[traj] (7.0,0.6) .. controls (8.3,0.6) and (8.729,0.070) .. (10.0,-0.2);
  \draw[traj] (10.0,-0.2) .. controls (10.587,-0.325) and (11.6,-0.75) .. (12.6,-0.95);

  \node[dot] (x0) at (0,0) {};
  \node[left=3pt] at (x0) {$x_0$};
  \draw[tangent] (x0) -- (0.901,0.631);
  \node[above left=2pt] at (0.4505,0.3155) {\textcolor{blue!70!black}{$\dot x(0)$}};

  \node[dot] (y0) at (-0.15,-1.3) {};
  \node[below=3pt] at (y0) {$y_0$};
  \draw[pert] (x0) -- (y0);
  \node[right=3pt] at (-0.075,-0.65) {\textcolor{blue!70!black}{$v_0$}};

  \node[dot] (xT) at (10.0,-0.2) {};
  \node[below left=4pt and -10pt] at (xT) {$x(\tau)$};
  \draw[tangent] (xT) -- (11.076,-0.429);
  \node[above right=2pt and -10pt] at (10.538,-0.3145)%
       {\textcolor{blue!70!black}{$\dot x(\tau)=D\varphi_\tau\dot x(0)$}};

  \node[dot] (yT) at (12.039,-1.024) {};
  \node[below left=1pt and -30pt] at (yT) {$\approx y(\tau)$};
  \draw[pert] (xT) -- (yT);
  \node[below=3pt] at (10.918,-0.571) {\textcolor{blue!70!black}{$D\varphi_\tau v_0$}};

  \node[dot] (xTalpha) at (12.039,-0.817) {};
  \node[above right=-3pt and 0pt] at (xTalpha) {$x(\tau+\alpha)$};
\end{tikzpicture}
\end{center}

\textbf{Remarks.}
%
Two clarifications:

\emph{Equilibria.} At a fixed point $f(x_*)=0$ so $\dot x=0$; there is no
distinguished direction $\dot x(\tau)$ for perturbations to align with, and
the collapse to the reference orbit becomes vacuous. The intuition
applies away from such points.

\emph{Magnitude versus direction.} The alignment is projective: it
controls the \emph{direction} of $D\varphi_\tau(x_0)v_0$, not its
length. If the scalar $\alpha(\tau;v_0)$ happens to be small, the
landing statement $y(\tau)\approx x(\tau+\alpha)$ above holds. If
instead $\alpha(\tau;v_0)$ is large, that statement fails: the tangent
line at $x(\tau)$ leaves a curved orbit immediately, so
$x(\tau)+\alpha\,\dot x(\tau)$ is \emph{not} close to
$x(\tau+\alpha)$, and one cannot conclude that the perturbed state
sits near a distant point of the reference orbit. What remains true
without smallness is only the directional (projective) statement: the
perturbation's image points along $\dot x(\tau)$. The clean, unconditional
content is therefore about tangent directions; the landing-on-the-orbit
picture is its small-shift, first-order shadow.

So: in the projective / first-order sense, and provided the velocity is
itself transverse to the slow bundle ($\dot x(0)\notin F_1(0)$, so that
$\dot x(\tau)$ represents the dominant direction), every normalised
tangent perturbation $v_0\notin F_1(0)$ becomes aligned with
$\dot x(\tau)$.
That is a reasonable one-line reading of ``uniformly positive cocycle
$\Rightarrow$ dynamics look one-dimensional'', with the understanding
that it is a statement about tangent directions, upgraded to a landing
on the reference orbit only for small shifts.

\section{2-positive cocycle alignment makes the dynamics look two-dimensional}

In the 1-positive case, uniform positivity of the linearised cocycle
forced generic tangent perturbations to become asymptotically aligned
with a single direction, represented concretely by the flow
$\dot x(\tau)$; in the fixed-time, small-shift, first-order picture,
perturbed states therefore landed close to the 1-D reference orbit
itself.

In the 2-positive case, the analogue turns out to be a statement about \emph{pairs}
of tangent vectors, or equivalently about the 2-planes they span.
Uniform 2-positivity gives Birkhoff contraction of the wedge cocycle on
the positive cone of $\wed\mathbb R^n$, not on individual directions.
Directly, this contracts the positive-Pl\"ucker region (the totally
positive part of $\Gr(2,n)$): the images of any two such planes become
exponentially close to one another, so at each time the family may be
represented by a concrete in-cone marker, for instance the transported
frozen plane $\Pi_{\mathrm{anc}}(\tau)$ constructed below. What direct
contraction does \emph{not} by itself supply is a canonical invariant
object: only under the compact-base exponential-separation refinement
of Section~3 is the contracting family represented canonically by the
invariant dominant plane $\Pi_*(\tau) = E_2(\tau) \subset
T_{x(\tau)}\mathbb R^n$. Alignment of an \emph{arbitrary} transverse
plane (not just positive-Pl\"ucker ones) likewise belongs to that
refinement. Granting it, single perturbations no longer collapse
to a line: since $\varphi_\tau$ is locally a diffeomorphism, the
nonlinear image of a small $n$-dimensional ball remains an
$n$-dimensional neighbourhood (a distorted one, not literally an
ellipsoid). To first order it is represented by the ellipsoid
$D\varphi_\tau(x_0)(\text{ball})$, which becomes increasingly thin in
the $n-2$ relative directions transverse to $\Pi_*(\tau)$, so the
dominant tangent geometry is two-dimensional and well approximated by
$\Pi_*(\tau)$. The dominant element is always a
genuine 2-plane, in every dimension $n$: this is a feature of the
exterior-power structure, not a low-dimensional accident, and we return
to it below. The figure is drawn for $n=3$.

\textbf{Setup on the wedge power.} The variational cocycle induces a
cocycle $\wed D\varphi_\tau(x_0)$ on the second exterior power
$\wed T_{x_0}\mathbb R^n$. On decomposable 2-vectors,
\[
   \wed D\varphi_\tau\,(v_0 \wedge w_0)
   \;=\;
   D\varphi_\tau v_0 \,\wedge\, D\varphi_\tau w_0,
\]
i.e., it transports 2-planes: if $\Pi_0 = \mathrm{span}(v_0, w_0)$ is a
2-plane in $T_{x_0}\mathbb R^n$, then its image under the cocycle is the
2-plane $\mathrm{span}(D\varphi_\tau v_0, D\varphi_\tau w_0)$ in
$T_{x(\tau)}\mathbb R^n$. Uniform 2-positivity makes $\wed D\varphi_\tau$
a Birkhoff contraction of the Hilbert projective metric on the positive
orthant of $\wed \mathbb R^n$. Two things need care in how this transfers
to the Grassmannian. First, a decomposable 2-vector $v_0\wedge w_0$ has
all its Pl\"ucker coordinates of one sign only for planes in a
particular region; a generic oriented 2-plane does not sit in the
positive orthant, so cone contraction does not act directly on all of
$\Gr(2,n)$. What is true is that planes whose Pl\"ucker representative
lies in the interior of the compound cone are contracted toward one
another, and the clean statement for arbitrary transverse planes is
again the exponential separation of Section~3 (of which this cone
contraction is the mechanism). Second, because the operator is the
compound $\wed D\varphi_\tau$ and not an arbitrary positive map, it
carries decomposables to decomposables exactly; this is what keeps
the dominant object a genuine 2-plane in every dimension, a point we
return to below.

\textbf{Alignment to a 2-plane.} Granting the index-2 exponential
separation of Section~3, which supplies a dominant 2-plane bundle
$E_2(\tau) = \Pi_*(\tau)$ and a complementary slow bundle $F_2(\tau)$ of
dimension $n-2$, any 2-plane $\Pi_0$ with $\Pi_0 \cap F_2(0) = \{0\}$ is
transported exponentially close to $\Pi_*(\tau)$. The exceptional set is
the 2-plane analogue of the ``bad directions'' in Section~1: a starting
2-plane that meets the slow bundle $F_2(0)$ nontrivially fails to reach
the dominant 2-plane. Concretely it is the incidence locus
\[
   \{\Pi_0 \in \Gr(2,n) : \Pi_0 \cap F_2(0) \ne \{0\}\},
\]
a special Schubert variety: choosing any complete flag with
$F^{n-2} = F_2(0)$, it is the Schubert variety of 2-planes meeting
$F^{n-2}$ nontrivially, of codimension one in $\Gr(2,n)$, hence
lower-dimensional and of measure zero, so a
2-plane chosen at random avoids it with probability one. This is what
``generic'' means here. The one thing to flag is that the slow bundle
$F_2(0)$ is not produced by cone contraction for free: its existence
\emph{is} the exponential-separation statement of Section~3, and the
Schubert description is meaningful only once that splitting is in hand.

\textbf{Identifying the dominant 2-plane.} As in Section~1, there is an
exact, elementary way to name a concrete anchor here too, again using
only data at $x_0$. Since $Df(x_0)^{[2]}$ is Metzler and irreducible,
Perron--Frobenius applied to this compound matrix, exactly as before,
gives a simple dominant eigen-bivector $\xi_0 \in \wed T_{x_0}\mathbb
R^n$, unique up to scale.

It turns out that $\xi_0$ is decomposable, $\xi_0 = a_0\wedge b_0$, in
\emph{every} dimension $n$, and so always determines a genuine 2-plane
$V_2(x_0) := \mathrm{span}(a_0,b_0)$. This is not obvious from counting
dimensions, since decomposable bivectors are a proper subvariety of
$\wed\mathbb R^n$ for $n\ge4$; it is a consequence of the exterior-power
structure. The point is that $e^{t\,Df(x_0)^{[2]}} = \wed e^{t\,Df(x_0)}$
is itself a second compound, so it carries decomposable bivectors to
decomposable bivectors for every $t$. Starting from any interior
positive decomposable bivector (for instance $u\wedge v$ with
$u=(1,\dots,1)$ and $v=(t_1,\dots,t_n)$ strictly increasing, whose
Pl\"ucker coordinates $t_j-t_i>0$ are all positive), the normalised
iterates $e^{t\,Df(x_0)^{[2]}}(u\wedge v)/\lVert\cdot\rVert$, where
$\lVert\cdot\rVert$ is the Euclidean norm on $\wed\mathbb R^n$ in the
standard Pl\"ucker basis, stay
decomposable and converge to the Perron ray of $Df(x_0)^{[2]}$; since
the decomposable bivectors form a closed set in
$\mathbb P(\wed\mathbb R^n)$ (they are the image of the Grassmannian
under Pl\"ucker), the limiting Perron ray is decomposable too. So the
frozen anchor is a genuine plane regardless of $n$; the special role of
$n=3$ elsewhere is only that the figure is drawn there. (For general
$k$ the same argument applies, needing an interior positive
decomposable $k$-vector to start from; choosing points $0<t_1<\cdots<t_k$
and taking the span of the $k$ vectors $(1,t_a,t_a^2,\dots,t_a^{n-1})$,
$a=1,\dots,k$, provides one, since every ordered $k\times k$ minor is a
generalized Vandermonde determinant and is strictly positive.)

There is also a direct downstairs route to the same frozen plane,
provided by the generalized Perron theorem of Fusco and Oliva
\cite{FuscoOliva1991}. Put $A_0 = Df(x_0)$ and $T_s = e^{sA_0}$, and let
\[
   P_-^k = \{z\in\mathbb R^n : s^-(z)\le k-1\}
\]
be the closed sign-variation cone of rank $k$, where $s^-(z)$ is the
number of sign changes in the coordinate sequence of $z$ after deleting
zeros. This is a generally nonconvex cone that contains $k$-dimensional
subspaces but no subspace of dimension $k+1$, so it is a ``cone of
dimension $k$'' in their sense (and lives \emph{downstairs}, in
$\mathbb R^n$, not in $\wedge^k\mathbb R^n$). Since $A_0^{[k]}$ is
irreducible Metzler, the $k$th \emph{multiplicative} compound of $T_s$
satisfies
\[
   (T_s)^{(k)} = \big(e^{sA_0}\big)^{(k)} = e^{s A_0^{[k]}} \gg 0
   \qquad (s>0),
\]
(the multiplicative compound of the exponential is the exponential of
the additive compound; we write $(\cdot)^{(k)}$ for the former and
$(\cdot)^{[k]}$ for the latter),
so $T_s$ is strictly sign-regular of order $k$ and the
variation-diminishing property \cite{MairhuberSchoenbergWilliamson1959, AlseidiMargaliotGarloff2019}
gives $T_s(P_-^k\setminus\{0\})\subset
\operatorname{int}P_-^k$. Fusco--Oliva Theorem~1 (with $E=\mathbb R^n$,
$K=P_-^k$, $d=k$, $T=T_s$) then yields unique complementary
$T_s$-invariant subspaces $W_1, W_2$, with $\dim W_1 = k$, $\dim W_2 =
n-k$, $W_1\setminus\{0\}\subset\operatorname{int}P_-^k$, $W_2\cap P_-^k
= \{0\}$, and a strict spectral-modulus gap between $T_s|_{W_1}$ and
$T_s|_{W_2}$. Because $e^{rA_0}$ commutes with $T_s$ and maps
$P_-^k\setminus\{0\}$ into its interior, $e^{rA_0}W_1$ is again a
$k$-dimensional $T_s$-invariant subspace with nonzero vectors in
$\operatorname{int}P_-^k$; by uniqueness $e^{rA_0}W_1 = W_1$ for every
$r>0$, and differentiating at $r=0$ gives $A_0 W_1\subset W_1$. So
$W_1 = V_2(x_0)$ (for $k=2$) is the frozen $A_0$-invariant plane we
want.

The two routes agree. Because $W_1$ is $k$-dimensional and
$T_s$-invariant, $\wedge^k W_1$ is a one-dimensional invariant subspace
of $(T_s)^{(k)} = e^{sA_0^{[k]}}$. By the standard Pl\"ucker-coordinate
characterization of $k$-planes contained in $\operatorname{int}P_-^k$,
the orientation of $W_1$ can be chosen so that all of its ordered
Pl\"ucker coordinates are strictly positive; then $\wedge^k W_1$ is a
positive invariant ray, and since the positive compound operator has a
unique positive Perron ray, $\wedge^k W_1 = \operatorname{span}\{\xi_0\}$,
the compound Perron vector above. Thus Fusco--Oliva and the compound
argument identify the same frozen $k$-plane: the former works directly
with a rank-$k$ (and generally nonconvex) cone in $\mathbb R^n$, the
latter with the ordinary
positive cone in $\wedge^k\mathbb R^n$. In their \S3, Fusco and Oliva
apply the theorem to positive Jacobi (tridiagonal) matrices, recovering
the Gantmacher--Krein oscillation theory \cite{GantmacherKrein1937};
that is a special illustration of their result, not the general
situation here, since the frozen Jacobian $Df(x_0)$ of a $k$-positive
system is an arbitrary matrix with $Df(x_0)^{[k]}$ irreducible Metzler
and need not have the tridiagonal (nearest-neighbour) structure of a
Jacobi matrix. We keep the elementary
compound-Perron argument above for self-containedness;
Fusco--Oliva gives the direct rank-$k$-cone formulation for a single
frozen operator, and Pol\'a\v{c}ik--Tere\v{s}\v{c}\'ak, used in
Section~3, supplies the corresponding cocycle separation after lifting
to the ordinary positive cone in $\wedge^k\mathbb R^n$, giving the
$t$-varying bundle $E_k(\tau)$ rather than the frozen anchor.

Transporting this frozen plane, define
\[
   \Pi_{\mathrm{anc}}(\tau) \;:=\; D\varphi_\tau(x_0)\, V_2(x_0),
\]
at every $\tau$. As with the line $L(\tau)$ in Section~1, this is an
illustrative anchor, not the dynamical dominant bundle: the Perron data
of the frozen compound $Df(x_0)^{[2]}$ has no a priori reason to
coincide with $\Pi_*(\tau) = E_2(\tau)$. What holds, provided the
anchor plane is transverse to $F_2(0)$, is exponential closeness,
\[
   \operatorname{dist}_{\Gr}\!\big(\Pi_{\mathrm{anc}}(\tau),\,\Pi_*(\tau)\big)
   \;\le\; C e^{-\gamma\tau},
\]
so $\Pi_{\mathrm{anc}}(\tau)$ approaches, but does not equal,
$\Pi_*(\tau)$. Under the strong-positivity mechanism used here the
frozen plane has a positive Pl\"ucker representative, which makes this
transversality automatic.

Provided $\mathrm{span}\{\dot x(0), w(0)\} \cap F_2(0) = \{0\}$ (a
\emph{different} transversality condition from the one just above: that
one concerned the frozen anchor plane $V_2(x_0)$, this one concerns the
plane spanned by the velocity and the chosen companion, and note it is a
condition on the \emph{span}, not merely on each of $\dot x(0)$ and
$w(0)$ separately, since two vectors can each avoid $F_2(0)$ while their
span still meets it), the
transported plane $\mathrm{span}(\dot x(\tau),\, D\varphi_\tau w(0))$
becomes exponentially well-aligned with $\Pi_*(\tau)$; in this
asymptotic sense both $\dot x(\tau)$ and $D\varphi_\tau w(0)$ come to
lie in $\Pi_*(\tau)$, neither exactly at finite $\tau$. Define $w(\tau)
:= D\varphi_\tau(x_0)\, w(0)$, the actual transported vector (not merely
its direction), so that later $\partial_s X(\tau;0) = w(\tau)$ makes the
family-parameter shift unambiguous. A caution on naming:
$w(\tau)$ is merely \emph{a} second vector spanning the transported
plane together with $\dot x(\tau)$, chosen through the companion
$w(0)$; it is not a canonical ``second-slowest mode''. A dominant
2-plane need not split into a first and a second direction at all, the
dynamics inside it may rotate or shear and carry no invariant line
field, so there is in general no dynamically distinguished second axis.
Together $\dot x(\tau)$ and $w(\tau)$ span a 2-plane approximating
$\Pi_*(\tau)$ ever better as $\tau$ grows; the figure below shows this
schematically, at some implicitly large but finite $\tau$.

\textbf{Consequence for a single perturbation.} Take a further
perturbation $v(0)\in T_{x_0}\mathbb R^n$ (independent of $\dot x(0)$
and $w(0)$, and with $v(0)\notin F_2(0)$, so that its image is not
trapped in the slow bundle), and let $y_0 = x_0 + v(0)$. In the
1-positive setting, $D\varphi_\tau v(0)$ became asymptotically aligned
with a specific line; under index-2 separation alone, without an
additional index-1 separation, this need not happen,
but $D\varphi_\tau v(0)$ becomes asymptotically aligned with
$\Pi_*(\tau)$. To first order in $|v(0)|$,
\[
   y(\tau) - x(\tau)
   \;\approx\;
   D\varphi_\tau v(0)
   \;\approx\;
   \alpha(\tau;v(0))\, \dot x(\tau) + \beta(\tau;v(0))\, w(\tau),
\]
where the coefficients $\alpha(\tau;v(0)), \beta(\tau;v(0))$ depend on
both $v(0)$ and the time $\tau$. So $y(\tau)$ lies, to first order
and asymptotically in $\tau$, in the 2-plane through $x(\tau)$ spanned by
$\dot x(\tau)$ and $w(\tau)$; the vector labelled $v(\tau)$ in the figure marks
a generic direction of this kind inside the transported plane (see the
remark below the figure on how it is drawn).

Different initial perturbations $v(0), w(0), \ldots$ (each with
$v(0)\notin F_2(0)$, so that its image is not trapped in the slow
bundle) give different coefficient pairs. Since each $\varphi_\tau$ is
a local diffeomorphism, the nonlinear image of a small $n$-dimensional
ball remains an $n$-dimensional neighbourhood, generally distorted, not
literally an ellipsoid and not a 2-dimensional set; what is
two-dimensional is the \emph{dominant} geometry. To first order the
image is represented by the ellipsoid $D\varphi_\tau(x_0)(\text{ball})$,
which becomes increasingly thin in its $n-2$ relative
directions transverse to $\Pi_*(\tau)$, so the dominant tangent
geometry is well approximated by $\Pi_*(\tau)$. To get a genuine
surface one must \emph{select} a transverse family: a single
one-parameter family of nearby initial conditions, evolved in time,
sweeps out a genuine immersed 2-surface $\Sigma$ containing the
reference orbit (provided its two tangent vectors stay independent),
and $\Pi_*(\tau)$ is asymptotically (not exactly) its tangent plane at
$x(\tau)$. Arbitrary nearby perturbations do \emph{not} lie on that
selected surface; different transverse families give different surfaces
with different tangent planes along the orbit, and there is no single
canonical envelope through which all nearby trajectories pass. With that
understood, $\Pi_*(\tau)$ plays for 2-positive systems the role the
tangent line played for 1-positive systems in Section~1, and a chosen
$\Sigma$ plays the role of the reference orbit.

\begin{center}
\begin{tikzpicture}[
    x=1cm, y=1cm,
    every node/.style={font=\small},
    dot/.style={circle, fill=black, inner sep=1.3pt},
    traj/.style={very thick, red!70!black},
    persp/.style={dashed, thick},
    tangent/.style={-{Latex[length=2.5mm]}, thick, blue!70!black},
    leader/.style={gray!55, dotted, thin}
  ]


  \draw[traj] (0,0) .. controls (1.178,0.549) and (2.3,-0.8) .. (3.5,-0.6);
  \draw[traj] (3.5,-0.6) .. controls (4.802,-0.383) and (6.728,0.070) .. (8.0,-0.2);

  \fill[cyan!12] (5.657,-0.645) -- (8.787,-1.310) -- (10.343,0.245) -- (7.213,0.910) -- cycle;
  \draw[cyan!40, thin] (5.657,-0.645) -- (8.787,-1.310) -- (10.343,0.245) -- (7.213,0.910) -- cycle;
  \node[cyan!55!black, font=\scriptsize\itshape] at (7.0,1.2) {$\Pi_*(\tau)$};

  \node[dot] (xT) at (8.0,-0.2) {};
  \node[below left=2pt] at (xT) {$x(\tau)$};
  \draw[tangent] (xT) -- (9.272,-0.470);
  \node[right=1pt, yshift=4pt] at (9.272,-0.470) {\textcolor{blue!70!black}{$\dot x(\tau)$}};
  \draw[tangent] (xT) -- (8.707,0.507);
  \node[above=1pt] at (8.707,0.507) {\textcolor{blue!70!black}{$w(\tau)$}};
  \draw[tangent] (xT) -- (8.0,0.7);
  \node[above=1pt] at (8.0,0.7) {\textcolor{blue!70!black}{$v(\tau)$}};

  \draw[traj] (8.0,-0.2) .. controls (8.880,-0.387) and (9.6,-0.75) .. (10.3,-1.15);

  \node[dot] (x0) at (0,0) {};
  \node[left=3pt] at (x0) {$x_0$};
  \draw[tangent] (x0) -- (0.997,0.465);
  \node[above right=1pt] at (0.997,0.465) {\textcolor{blue!70!black}{$\dot x(0)$}};

  \node[dot] (y0) at (0.362,1.352) {};
  \node[above right=1pt] at (y0) {\textcolor{blue!70!black}{$y_0$}};
  \draw[tangent] (x0) -- (y0);
  \node[left=2pt] at (0.181,0.676) {\textcolor{blue!70!black}{$v(0)$}};
  \draw[leader] (0.362,1.32) -- (0.362,0.05);

  \node[dot] (z0) at (0.226,-1.280) {};
  \node[below=2pt] at (z0) {\textcolor{blue!70!black}{$z_0$}};
  \draw[tangent] (x0) -- (z0);
  \node[right=2pt] at (0.113,-0.640) {\textcolor{blue!70!black}{$w(0)$}};
  \draw[leader] (0.226,-1.25) -- (0.226,0.05);

\end{tikzpicture}
\end{center}

\emph{The plane $\Pi_*(\tau)$ is drawn in local coordinates centred at
$x(\tau)$. The curve passes through the base point $x(\tau)$, with
tangent $\dot x(\tau)$ exponentially close to $\Pi_*(\tau)$, but it is
not generally contained in that fixed affine plane, being curved. As
$\tau$ varies, $\Pi_*(\tau)$ varies continuously, remaining
(asymptotically) close to the tangent plane of the fixed, undrawn
curved surface $\Sigma$ at the point $x(\tau)$. The arrows $v(\tau)$
and $w(\tau)$ are schematic vectors drawn inside $\Pi_*(\tau)$ for
legibility; at finite $\tau$ they need not belong to it exactly.}

\textbf{On the placement of $v(\tau)$ and $w(\tau)$.} The vectors $v(\tau)$ and
$w(\tau)$ drawn inside $\Pi_*(\tau)$ are positioned by a drawing convention,
a legible choice of direction within the plane, not by computing
$D\varphi_\tau v(0)$ and $D\varphi_\tau w(0)$ explicitly; no pushforward was
carried out. The qualitative content of the figure is nonetheless
correct: since $\Pi_*(\tau)$ is (asymptotically, as $\tau$ grows) exactly the
set of directions that transported vectors align with, $D\varphi_\tau
v(0)$ and $D\varphi_\tau w(0)$ genuinely do become well-aligned with
$\Pi_*(\tau)$ to first order. What the figure does \emph{not} assert is
that $v(\tau)$ is the specific image of $v(0)$, or $w(\tau)$ the specific
image of $w(0)$. It illustrates only the general phenomenon, namely that
directions spread apart at time $0$ become approximately coplanar at
time $\tau$, rather than a particular computed trajectory for either
vector.

\textbf{Interpretation of the two coordinates.} Concretely, let
\[
   X(t;s) \;:=\; \varphi_t\bigl(x_0 + s\, w(0)\bigr),
\]
defined for $t\ge0$ and $s$ near $s_0 = 0$. Then $X(t;0) =
x(t)$ recovers the reference orbit, and $\partial_s X(\tau;0) =
D\varphi_\tau(x_0) w(0) = w(\tau)$ exactly, with $w(\tau)$ the
transported vector defined above: varying $t$ near $\tau$ moves along
the reference orbit itself (direction $\dot x(\tau)$), while varying $s$
near $s_0$ moves transversally (direction $w(\tau)$); together
they trace out $\Sigma = \{X(t;s)\}$. This surface depends on the chosen
transverse direction $w(0)$: a different transverse family produces a
different surface, with a different tangent plane along the reference
orbit. There is no ``the'' envelope independent of that choice; $\Sigma$
is one representative surface, adapted to $w(0)$.

The $\alpha(\tau;v(0))\,\dot x(\tau)$
component shifts $y(\tau)$ along the reference orbit, i.e.,
$x(\tau) + \alpha\,\dot x(\tau) \approx x(\tau + \alpha)$ for small
$\alpha$ (the same small-shift proviso as in Section~1). The
$\beta(\tau;v(0))\,w(\tau)$ component is transverse to the orbit inside
$\Sigma$. Because $\partial_s X(\tau;0) = w(\tau)$ exactly, the same
coefficient $\beta$ serves as the first-order shift in the family
parameter $s$: within the chosen family $X(t;s)$, the transverse
coordinate simply moves to a neighbouring member, so the combined small
shift reads
\[
   y(\tau) \;\approx\; X\big(\tau + \alpha;\ s_0 + \beta\big),
\]
i.e.\ the perturbed state lands, to first order, on a nearby orbit of
this family at a slightly shifted time. (No claim is made that the
dominant 2-plane distribution integrates to a foliation: that would
require a Frobenius involutivity condition which positivity does not
supply. The statement here is only about the one chosen family $\Sigma$,
where a foliation by the $s$-parameter orbits holds by construction.
This is exactly the point at which the differential-positivity theory
of Mostajeran and Sepulchre \cite{MostajeranSepulchre2018} imposes
\emph{involutivity} of the rank-$k$ dominant distribution in order to
obtain a genuine $k$-dimensional integral submanifold attractor; the
finite-time alignment developed here makes no such hypothesis, at the
cost of speaking about tangent directions rather than an invariant
submanifold.)

\textbf{Remarks.} This is a first-order picture on the tangent bundle,
pushed down to state space; it is not a statement about exact nonlinear
orbits, and (as in Section~1) the passage to finite perturbations at
large time is the non-uniform step discussed in Section~3. Equilibria
are exceptional. And if the cocycle is \emph{also} uniformly 1-positive,
so that it has a dominant line bundle $E_1(\tau)$ inside $E_2(\tau)$,
then within the envelope a further alignment to $E_1$ collapses the
picture back to the 1-D one, \emph{provided} the velocity is
nondegenerate for that finer splitting, i.e.\ $f(x_0) \notin F_1(x_0)$
(otherwise $\dot x$ lies in the slower complement and does not represent
$E_1$). So 2-positivity yields a dominant two-dimensional tangent
geometry; a chosen transverse family then supplies one illustrative
envelope surface, with no canonical envelope singled out. Whether the
dynamics inside the dominant plane collapse further is a separate
(1-positive) question, subject to that nondegeneracy.

\section{Making the alignment precise}
\label{sec:rigorous}

The two sections above are heuristic. It turns out that the skeleton
behind them can be stated precisely, and this section does so. The
primary statement is a \emph{finite-time} one: for every $T\ge0$ the
compound map satisfies a window-by-window pairwise contraction estimate,
and once $T\ge h_0$ the image of the entire nonzero cone has finite
projective diameter, decreasing exponentially with $T$ (so interior
cone directions have collapsed to within $e^{-cT}$ of one another). This needs nothing
beyond the appendix estimate, without assuming compactness or the
existence of invariant bundles. Only afterwards, as an optional refinement, do we ask
what happens as $T\to\infty$ (a canonical limiting bundle, requiring
more), and finally what is lost in passing from tangent directions to
finite nonlinear perturbations. Statements are given with sketches
rather than full proofs; the aim is to isolate exactly what must be
assumed and what then follows.

Throughout, $\Phi(t,s) := D\varphi_t(x_0)\,D\varphi_s(x_0)^{-1}$ is the
variational transition operator along the reference trajectory, $d_H$ is
the Hilbert projective metric on the positive cone $K_k$ of
$\wedge^k\mathbb R^n$, and $\angle$ denotes the (projective) angle
between directions or subspaces.

\begin{theorem}[Finite-time mutual alignment]
\label{thm:align}
Assume uniform $k$-positivity along the trajectory on $[0,T]$ (the
standing hypotheses, quantified in the appendix). Write $P_T :=
\wedge^k\Phi(T,0)$ for the $k$th compound of the transition operator,
let $K_k$ be the positive cone of $\wedge^k\mathbb R^n$, and set
$m := \lfloor T/h_0\rfloor$. There are constants $D_0<\infty$ and
$\kappa_0\in(0,1)$, and a rate $c = -h_0^{-1}\log\kappa_0 > 0$,
depending only on the positivity data $(\delta,\Lambda,N,r,h_0)$ and not
on $T$, such that:
\begin{enumerate}
\item[(i)] \emph{(Pairwise contraction.)} For all $\xi,\eta$ in the
interior of $K_k$,
\[
   d_H\!\big(P_T\,\xi,\ P_T\,\eta\big)
   \;\le\; \kappa_0^{\,m}\, d_H(\xi,\eta).
\]
\item[(ii)] \emph{(Shrinking image.)} For $T\ge h_0$ the image cone has
finite, exponentially small projective diameter,
\[
   \operatorname{diam}_H\big(P_T(K_k\setminus\{0\})\big)
   \;\le\; D_0\,\kappa_0^{\,m-1} \;=\; O(e^{-cT}).
\]
(For every $T>0$ the image already lies in the open cone, so its
projective diameter is finite; what the bound above needs is $T\ge h_0$,
enough for at least one complete window, so that the diameter is not
merely finite but exponentially small. As $T\to0$ the diameter diverges
and the contraction coefficient tends to $1$, which is why a fixed
window is required, as explained after the sketch.)
\end{enumerate}
Consequently, for $T\ge h_0$, all interior cone directions are carried
to within $O(e^{-cT})$ of one another. Fixing any strictly positive
in-cone reference makes this concrete: the transported Perron object
$L(T) = \Phi(T,0)\,\mathrm{span}(p_0)$ for $k=1$, or
$\Pi_{\mathrm{anc}}(T) = \Phi(T,0)\,V_2(x_0)$ for $k=2$, is in the cone
(since $p_0$, and the Perron bivector of $Df(x_0)^{[2]}$, are strictly
positive), and every interior cone direction has image within
$O(e^{-cT})$ of it. The velocity direction $\dot x(T)$ is not
automatically an in-cone reference: $\dot x(0)$ is a generic tangent
vector, not \emph{a priori} in the cone. It is a valid reference at
horizon $T$ precisely when $P_T\dot x(0)$ does not suffer the
cancellation described in the next remark; under the asymptotic
refinement of Section~3 this becomes the cleaner condition
$\dot x(0)\notin F_1(0)$.
\end{theorem}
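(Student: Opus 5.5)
The plan is to reduce everything to Birkhoff's theorem applied window by window, using the finite-window positivity estimate already recorded in ``The Birkhoff rate, precisely'' and quantified in the appendix.

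First, write $P_T = \wedge^k\Phi(T,0) = \Psi(T,0)$. Note that $\Psi$ is the transition matrix of $\dot\eta = Df(x(t))^{[k]}\eta$, and factor it by the cocycle property as
\[
   P_T = \Psi(T, m h_0)\,\Psi(m h_0,(m-1)h_0)\cdots\Psi(h_0,0).
\]
The ingredients for the factors are as follows.
\begin{enumerate}
\item[(a)] Each full-window factor $\Psi((j+1)h_0, jh_0)$ is entrywise strictly positive. Its entries are bounded below by a quantity of order $(\delta h_0)^r/r!$ times $e^{-\Lambda h_0}$ (the diagonal damping) and above by $e^{N\Lambda h_0}$. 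This is the appendix estimate.
\item[(b)] The leftover factor $\Psi(T,mh_0)$ is Metzler-generated, so it maps $K_k$ into itself. Hence it is non-expansive for $d_H$ (Birkhoff's theorem with $\kappa\le 1$, or directly from order-preservation and homogeneity).
\item[(c)] From (a), the projective diameter $\operatorname{diam}(\Psi((j+1)h_0,jh_0)) = \max\log\frac{P_{ip}P_{jq}}{P_{iq}P_{jp}}$ is bounded by a constant $D_0$ depending only on $(\delta,\Lambda,N,r,h_0)$. Set $\kappa_0=\tanh(D_0/4)<1$.
\end{enumerate}

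Part (i) then follows from submultiplicativity of the Birkhoff contraction coefficient along the factorization. The $m$ full windows contribute $\kappa_0^m$ and the remainder contributes at most $1$. For the ordering I would multiply the windows starting from time $0$, with the leftover on the left, although any placement works since each factor is a self-map of the cone.

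For (ii), I would use the standard Birkhoff fact that if $Q$ is entrywise positive, then $\operatorname{diam}_H Q(K_k\setminus\{0\})$ equals the finite quantity $\operatorname{diam}(Q)\le D_0$. Apply this to the first window $\Psi(h_0,0)$: its image of the whole closed cone minus the origin is a subset of the interior with diameter at most $D_0$. The remaining $m-1$ full windows each contract $d_H$ by $\kappa_0$, and the leftover does not expand. Together these give the bound $D_0\kappa_0^{m-1}$. Since $m\ge \tau/h_0-1$, this is at most $D_0\kappa_0^{-2}e^{-cT}$. For the parenthetical claim that every $T>0$ already gives finite diameter, I would argue that $\Psi(T,0)$ is positive for any $T>0$: the exponential series of the banded irreducible generator fills in every entry. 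That statement is not needed for the main bound, though.

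For the consequence, take the strictly positive references $p_0$, or the Perron bivector $\xi_0 = a_0\wedge b_0$, which lie in the interior of $K_k$. Their images lie in $P_T(K_k\setminus\{0\})$, so every interior image is within $D_0\kappa_0^{m-1}$ of them in $d_H$. To convert this to the projective angle, I would use the elementary bound that on the normalized positive cone, angle (or Euclidean distance of normalized vectors) is at most a constant times $d_H$, for example $\|u/\|u\|-v/\|v\|\|\le e^{d_H}-1$ in the sup norm.

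The main obstacle is step (a): producing a lower bound on every entry of $\Psi(t+h_0,t)$ that is uniform in $t$, using only the a.e.\ time-varying band on a fixed strongly connected graph. I would handle it as follows.
\begin{itemize}
\item Shift to $B(t)+\Lambda I\ge 0$, which converts the diagonal into a scalar factor $e^{-\Lambda h_0}$.
\item Expand the transition matrix of the nonnegative generator in its Peano--Baker (Dyson) series. All terms are nonnegative, so dropping terms only lowers entries.
\item Keep, for each pair $(i,p)$, only the term corresponding to a shortest path of length $\ell\le r$ in $G$. Its iterated time integral is bounded below by $\delta^\ell h_0^\ell/\ell!$.
\end{itemize}
The upper bound is routine via Gr\"onwall. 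The time-varying nature of the band is exactly what makes the Dyson-series path argument necessary, rather than a frozen matrix exponential.
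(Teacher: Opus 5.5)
Your proposal is correct and follows essentially the paper's route. For (i) you use the window factorization with the appendix's shifted Peano--Baker single-shortest-path lower bound, per-window Birkhoff contraction, and a non-expansive remainder; for (ii) the first full window collapses the whole cone to diameter at most $D_0$, and the remaining $m-1$ windows contract it (your Gr\"onwall upper bound $e^{N\Lambda h_0}$ and the conversion $\lVert \hat u-\hat v\rVert_\infty\le e^{d_H}-1$ are harmless refinements, and note that a uniform lower bound of order $(\delta h_0)^r/r!$ uses $\delta h_0\le 1$, or else the minimum over path lengths $\ell\le r$).
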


\emph{Sketch.} For (i), each window of length $h_0$ is a strict Birkhoff
contraction with coefficient $\kappa_0<1$ uniform in $t$ (appendix), and
$P_T$ is the product of $m$ such windows (the final shorter remainder is
non-expansive on the cone), giving the stated bound. Claim (ii) is a
\emph{separate} argument, since $K_k$ has infinite Hilbert diameter and
one cannot simply multiply that by $\kappa_0^m$: instead, the first
complete window already maps $K_k$ into a set of projective diameter at
most $D_0$ (the entrywise-positive band of the window map, appendix),
and each of the remaining $m-1$ windows contracts \emph{that finite
diameter} by $\kappa_0$, yielding $D_0\kappa_0^{m-1}$. The reference
statement is then immediate, since a strictly positive reference lies in
the image cone. \hfill$\square$

It is worth being precise about the role of the window length, since it
is easy to misstate. Under the fixed strongly connected support assumed
here, $\Psi(t+h,t) = I + hB(t) + o(h)$ is entrywise strictly positive
for \emph{every} $h>0$: entries joining coordinates at graph distance
$q$ enter at order $h^q$ (this is the appendix computation). So every
positive-duration map is already a strict Birkhoff contraction, and the
image cone has finite projective diameter for every $T>0$. What fails as
$h\to0$ is not strict positivity but \emph{uniformity}: the long-path
entries are $O(h^q)$, so the projective diameter of the window image
diverges and its Birkhoff coefficient tends to $1$. A fixed window
$h_0>0$ is needed not to obtain strict positivity, but to obtain a
contraction coefficient $\kappa_0$ bounded uniformly away from $1$,
which is what makes the composite bound decay in $T$.

Two features of this statement deserve emphasis, because they are what
make it self-contained. First, the reference direction is a matter of
convenience among \emph{in-cone} markers: the theorem says all interior
cone directions are mutually aligned, so any strictly positive marker
will do, and the transported Perron object $L(T)$ (or
$\Pi_{\mathrm{anc}}(T)$ for $k=2$) is simply the most concrete one
available. The velocity is not automatically such a marker, since
$\dot x(0)$ need not lie in the cone; it earns the role only under the
condition discussed in the next remark. Second, the ``sliver''
$P_T K_k$ drifts with $T$: the theorem does not claim it
converges to a fixed limiting object. This drift is compatible with
mutual alignment being exponentially tight at each fixed $T$, since
alignment constrains only the relative positions of images within the
sliver, not the location of the sliver itself.

\textbf{General directions and the role of transversality.} The
cone statement above concerns directions in the positive cone $K_k$. A
general element $\xi_0\in\wedge^k\mathbb R^n$ (for $k=1$ a tangent
vector; for $k\ge2$ a decomposable $\xi_0$ representing a $k$-plane)
need not have a positive representative, and here a transversality
condition genuinely enters, even at finite $T$. Split $\xi_0 = \xi_0^+
- \xi_0^-$ into coordinatewise positive and negative parts, both in the
cone. (For $k\ge2$ the parts $\xi_0^\pm$ need not themselves be
decomposable; that is harmless here, as the estimate below is purely
about cancellation in $\wedge^k\mathbb R^n$.) Then
\[
   P_T\,\xi_0
   = P_T\,\xi_0^+ \;-\; P_T\,\xi_0^- ,
\]
and both terms land in the same thin sliver, hence are nearly parallel.
But $P_T\xi_0$ is their \emph{difference}: two large nearly
parallel vectors whose difference can point in an uncontrolled direction
if their magnitudes nearly cancel. Near-cancellation is the finite-time
manifestation of the slow (exceptional) directions of $P_T$: it says
$\xi_0$ has a large component there. The two notions, near-cancellation
of the positive parts and largeness along the smallest-growth
directions of the finite product, are related but not identical without
an argument (Euclidean smallest-growth directions are singular-vector
subspaces, while cancellation is measured by a dual positive functional
from the approximate rank-one structure); making the equivalence
uniform in $T$ is precisely what the exponential-separation theory below
provides. At any fixed $T$ it is a condition on the finite product
$P_T$ alone, and the slow set drifts with $T$.

\textbf{Asymptotic refinement: a canonical invariant bundle.} The
finite-time statement never singles out a limiting object. If one wants
a genuinely $\Phi$-invariant \emph{limiting} bundle $E_k(t)$ that the
sliver converges to as $T\to\infty$, more is required, and it is worth
stating what, and being careful about which space the separation
theorem acts in. Suppose the forward orbit closure
$K = \overline{\{x(t):t\ge0\}}$ is compact and \emph{fully} invariant
under the time-$h_0$ map, so that $g := \varphi_{h_0}|_K$ is a
homeomorphism of $K$. (This is the hypothesis actually used; a
convenient stronger sufficient setting is that the reference trajectory
lie in a compact invariant set on which the flow is defined for all
$t\in\mathbb R$. A time-$h_0$ map of an ODE is injective wherever the
flow is defined, so $g$ is a continuous bijection of a compact space
onto itself, hence a homeomorphism.) The standing uniform-positivity
hypotheses were stated along the reference trajectory; they extend to
all of $K$ as follows. Since $t\mapsto Df(x(t))^{[k]}$ is continuous,
the inequalities assumed for a.e.\ $t$ in fact hold for every $t$; and
since $x\mapsto Df(x)^{[k]}$ is continuous with the orbit dense in its
closure, the fixed-edge inequalities ($B_{ij}(x)\ge\delta$ on edges
of $G$, $B_{ij}(x)\ge0$ off-diagonal, $|B_{ij}(x)|\le\Lambda$) carry to
every $x\in K$. Full invariance then makes every length-$h_0$ compound
fibre map over $K$ strongly positive with the same uniform estimates.

Set $Y := \wedge^k\mathbb R^n$, let $C\subset Y$ be the standard
positive compound cone, and let $\mathcal T_x := \wedge^k
D\varphi_{h_0}(x)$ be the compound window map. We use the
product-bundle version of Pol\'a\v{c}ik--Tere\v{s}\v{c}\'ak Theorem~1,
with the proof corrected by its Erratum. The theorem is stated for a
compact base $K$ and a homeomorphism $g:K\to K$, with a continuous
family of fibre maps that are strongly positive and compact as
operators (their theorem is set in possibly infinite-dimensional
ordered Banach spaces, where operator compactness, sending bounded sets
to relatively compact ones, is a genuine hypothesis). These hypotheses
hold here: every linear operator on the finite-dimensional space
$\wedge^k\mathbb R^n$ is compact in that sense, continuity follows from
continuous dependence of the variational flow
on the base point, and the finite-window positivity estimate makes each
$\mathcal T_x$ strongly positive ($\mathcal T_x(C\setminus\{0\})\subset
\operatorname{int}C$). The theorem yields, simultaneously, continuous
principal lines in the fibre and in its dual,
\[
   E^{\mathrm c}_k(x) = \operatorname{span}\{v_x\}\subset Y,
   \qquad
   E^{\mathrm c,*}_k(x) = \operatorname{span}\{\ell_x\}\subset Y^*,
\]
with $v_x\in\operatorname{int}C$ and $\ell_x$ strictly positive,
satisfying the covariance relations $\mathcal T_x E^{\mathrm c}_k(x) =
E^{\mathrm c}_k(g(x))$ and $\mathcal T_x^{*}E^{\mathrm c,*}_k(g(x)) =
E^{\mathrm c,*}_k(x)$. Defining $F^{\mathrm c}_k(x) := \ker\ell_x$ gives
a continuous, codimension-one complement containing no nonzero positive
vector; for the invertible finite-dimensional maps used here the
forward invariance of the theorem's general setting upgrades to
equality, $\mathcal T_x F^{\mathrm c}_k(x) = F^{\mathrm c}_k(g(x))$.
The discrete exponential-separation estimate is, for every integer
$m\ge1$ and every unit $w\in F^{\mathrm c}_k(x)$,
\[
   \big\lVert \mathcal T^{(m)}_x\,w\big\rVert
   \;\le\; M\gamma^{\,m}\,\big\lVert \mathcal T^{(m)}_x\,v_x\big\rVert,
   \qquad M>0,\ \gamma\in(0,1),
\]
where $\mathcal T^{(m)}_x = \mathcal T_{g^{m-1}(x)}\cdots\mathcal
T_{g(x)}\mathcal T_x$; since $E^{\mathrm c}_k$ is one-dimensional and
$v_x$ is normalised, this is exactly $\lVert \mathcal T^{(m)}_x|_{F^{\mathrm
c}_k}\rVert / \underline m(\mathcal T^{(m)}_x|_{E^{\mathrm c}_k}) \le
M\gamma^{\,m}$, with $\underline m(\cdot)$ the conorm.

This is a splitting \emph{of the compound space}
$Y=\wedge^k\mathbb R^n$; it is not yet the desired splitting of
$\mathbb R^n$. Descending uses the compound structure:
\begin{enumerate}
\item[(a)] the principal compound line $E^{\mathrm c}_k(x)$ is
decomposable (since $\wedge^k\Phi$ preserves decomposability, as in
Section~2), so it equals $\wedge^k E_k(x)$ for a genuine $k$-plane
$E_k(x)\subset\mathbb R^n$;
\item[(b)] the strictly positive principal covector $\ell_x$ is
likewise decomposable, $\ell_x = \ell_1(x)\wedge\cdots\wedge\ell_k(x)$.
This is not a consequence of positivity alone (strict positivity of a
$k$-covector does not force decomposability when $2\le k\le n-2$); it
follows because the adjoint fibre maps have the compound form
$\mathcal T_x^{*} = \wedge^k\big(D\varphi_{h_0}(x)^{*}\big)$ and hence
preserve decomposable covectors, so the principal dual ray is obtained
as a projective limit of positive decomposable covectors and the
Pl\"ucker image is closed, exactly as for the primal line in Section~2;
\item[(c)] set $F_k(x) := \bigcap_{i=1}^k \ker\ell_i(x)$, of dimension
$n-k$. Invariance comes from the covariance of the decomposable dual
line: $(\wedge^k A_x)^{*}\,\ell(g(x)) = c(x)\,\ell(x)$ with $c(x)>0$
says that $A_x^{*}$ carries the span of the target covectors
$\ell_i(g(x))$ onto the span of the source covectors $\ell_i(x)$, so
taking common kernels gives $A_x F_k(x) = F_k(g(x))$. Complementarity
$E_k(x)\oplus F_k(x) = \mathbb R^n$ holds because the pairing of the
strictly positive principal $k$-vector with the strictly positive
principal $k$-covector is nonzero;
\item[(d)] for a decomposable $\xi$ representing a plane $\Pi$, lying in
$F^{\mathrm c}_k(x) = \ker\ell_x$ is equivalent to $\Pi\cap F_k(x)
\ne\{0\}$, which is the Schubert exceptional set of the earlier
sections.
\end{enumerate}
These steps build the two bundles, their invariance, and the
exceptional set, but not yet the \emph{quantitative} domination
downstairs. That last step is a short singular-value estimate. Write
$A := A^{(m)}_t := \Phi(t+mh_0,\,t)$ for the downstairs iterate, and let
$s_1\ge s_2\ge\cdots\ge s_k>0$ be the singular values of $A|_{E_k(t)}$.
Choose $e_1,\dots,e_{k-1}$ to be the right singular vectors of
$A|_{E_k(t)}$ for $s_1,\dots,s_{k-1}$ (\emph{not} an arbitrary
orthonormal frame; this choice is what makes the product bound below
hold), and take a unit $f\in F_k(t)$ nearly maximising $\lVert
A|_{F_k(t)}\rVert$. The decomposable $e_1\wedge\cdots\wedge e_{k-1}\wedge
f$ lies in the compound slow hyperplane $F^{\mathrm c}_k(t)$. Here the
lower bound comes from the angle between the \emph{target} bundles:
since $A e_i \in E_k(t+mh_0)$ and $A f\in F_k(t+mh_0)$, the uniform
angle bound $\inf_x\angle(E_k(x),F_k(x))>0$ gives
$\operatorname{dist}\big(A f,\,E_k(t+mh_0)\big)\ge c_0\lVert A
f\rVert$, and therefore
\[
   \big\lVert \wedge^k A\,(e_1\wedge\cdots\wedge e_{k-1}\wedge f)\big\rVert
   \;\ge\; c_0\, s_1\cdots s_{k-1}\,\lVert A f\rVert,
   \qquad c_0>0 \text{ uniform.}
\]
On the principal compound line, meanwhile,
$\underline m\big(\wedge^k A\,\big|_{\wedge^k E_k(t)}\big) = s_1\cdots
s_k$. Dividing, and using the compound exponential-separation estimate
for the ratio, gives
\[
   \frac{\lVert A f\rVert}{s_k}
   \;\le\; C\,
   \frac{\big\lVert \wedge^k A\,\big|_{F^{\mathrm c}_k(t)}\big\rVert}
        {\underline m\big(\wedge^k A\,\big|_{E^{\mathrm c}_k(t)}\big)}
   \;\le\; C M\gamma^{\,m},
\]
that is, $\lVert A|_{F_k(t)}\rVert / \underline m(A|_{E_k(t)}) \le
C M\gamma^{\,m}$, which is the downstairs dominated splitting
$\mathbb R^n = E_k(t)\oplus F_k(t)$. The compactness of the base $K$ is
used at several points. It is not needed for operator-compactness of
the fibre maps (which is automatic here, every operator on the
finite-dimensional fibre being compact); rather, it provides
a compact base for the bundle theorem, the uniform positivity and
continuity estimates, and, from continuity of the bundle over a compact
base, the uniform angle bound $\inf_x\angle(E_k(x),F_k(x))>0$ (itself
used in the estimate just given). On a noncompact forward orbit, or
without full invariance, one would have to assume such a splitting
directly. The precise hypotheses of Pol\'a\v{c}ik--Tere\v{s}\v{c}\'ak
Theorem~1, verified against the source for the compound-cocycle
application, are exactly those met here: a compact base with a
homeomorphism base map, a continuous family of compact strongly
positive fibre maps, and the discrete integer-iterate separation
estimate used above; the printed theorem is stated for a product bundle
$K\times X$, and Remark~(i) there notes the routine extension to a
bundle with a fixed ordered fibre, which is the (simpler) situation
$K\times Y$ used here. For the ordered-space separation theorem see
Pol\'a\v{c}ik--Tere\v{s}\v{c}\'ak \cite{PolacikTerescak1993} (Theorem~1,
with the proof corrected by the Erratum \cite{PolacikTerescak1994}),
Fusco and Oliva \cite{FuscoOliva1991} for the autonomous single-operator
$k$-dimensional Perron theorem, and Birkhoff \cite{Birkhoff1957, Birkhoff1967} and
Hopf \cite{Hopf1963} for the underlying projective-metric contraction.

\textbf{On the frozen anchors $L(\tau)$ and $\Pi_{\mathrm{anc}}(\tau)$.} The devices
built in the two sections from the frozen matrices $Df(x_0)$ and
$Df(x_0)^{[2]}$ are, respectively, the transported Perron eigen-line and
eigen-plane of a \emph{single} matrix at $x_0$. They are exact and
elementary, which is why they are convenient for drawing and for
intuition. For the finite-time statement,
Theorem~\ref{thm:align}, they serve only as in-cone reference markers,
and any such marker would do. For the asymptotic refinement, they are
not in general the invariant bundle: the Perron data of the frozen
generator has no a priori reason to coincide with the limiting bundle of
the nonautonomous cocycle. When the frozen eigen-object is transverse to
the slow directions (the generic case, but an assumption), the angle
between the transported anchor and the limiting bundle tends to $0$
exponentially, so $L(\tau)$ and $\Pi_{\mathrm{anc}}(\tau)$ approach the
respective limiting bundles without ever equalling them at finite
$\tau$. The frozen eigen-plane is, at least, always a genuine 2-plane,
in every dimension, by the compound-decomposability argument of
Section~2, so the picture is faithful on that score.

\textbf{From tangent directions to nonlinear trajectories.}
It is worth being explicit about one gap.
Theorem~\ref{thm:align} and its refinement are about $\Phi(t,0)$, the
derivative cocycle; to turn them into statements about the actual
nonlinear map $\varphi_t$ one writes
\[
   \varphi_t(x_0+v_0) - \varphi_t(x_0) = \Phi(t,0)v_0 + R_t(v_0),
   \qquad R_t(v_0) = o(\lvert v_0\rvert)\ \text{for fixed } t,
\]
but the remainder $R_t$ is \emph{not} uniform in $t$: under expansion it
can grow with $t$, so the two limits $v_0\to0$ and $t\to\infty$ do not
commute. Consequently the ``landing on the reference orbit'' (Section~1)
and ``landing on the envelope'' (Section~2) pictures are rigorous only
in the regime $v_0\to0$ at fixed $t$, with the shift kept small; they
are first-order, fixed-time shadows of the directional
alignment, not standalone long-time nonlinear theorems. A genuine
long-time nonlinear statement requires shrinking $\lvert v_0\rvert$
suitably as $t$ grows, or additional dynamical hypotheses that yield
uniform control of the remainder, such as appropriate stability, normal
hyperbolicity, shadowing, or explicit estimates on $D^2\varphi_t$
relative to $D\varphi_t$. It is worth stressing that a compact invariant
set and a bound on $D^2 f$ do \emph{not} by themselves suffice: both
$D\varphi_t$ and $D^2\varphi_t$ may grow exponentially along a
trajectory in a compact set, so the Taylor remainder of $\varphi_t$ need
not be uniformly controlled. Such a statement is not claimed here.

\appendix

\section{Explicit entry bound for the window transition matrix}

This appendix supplies the inequality quoted in the remark ``The
Birkhoff rate, precisely'': it makes the lower bound of order
$(\delta h_0)^r/r!$ explicit, and derives from it the finite,
$t$-uniform bound on the projective diameter.

\textbf{Setup.} Write $B(t) := Df(x(t))^{[k]}$, an $N\times N$ matrix
with $N = \binom{n}{k}$, and let $\Psi(t_2,t_1)$ be the transition matrix
of the induced compound system $\dot\eta = B(t)\eta$, so that
$\partial_{t_2}\Psi(t_2,t_1) = B(t_2)\Psi(t_2,t_1)$, $\Psi(t_1,t_1) =
I$, and $\wedge^k D\varphi_\tau(x_0) = \Psi(\tau,0)$. We use the
standing hypotheses of uniform $k$-positivity from the introduction:
for a.e.\ $t$, $B(t)$ is Metzler ($B_{ij}(t) \ge 0$ for $i \ne j$); the
off-diagonal entries corresponding to the edges of a fixed strongly
connected directed graph $G$ on $\{1,\dots,N\}$ satisfy $B_{ij}(t) \ge
\delta > 0$; and all entries are bounded, $|B_{ij}(t)| \le
\Lambda$. Let $r$ be the diameter of $G$, the largest over ordered
pairs $(i,j)$ of the shortest directed path length; since $G$ is
strongly connected on $N$ nodes, $r \le N-1$. Fix a window length $h_0
> 0$ with $\delta h_0 \le 1$, and put $\mu := \Lambda$ and $M := \mu +
\Lambda = 2\Lambda$.

\textbf{Lemma.} For every $t$ and all $i,j$,
\[
   \beta \;\le\; \Psi(t+h_0,t)_{ij} \;\le\; \Gamma,
   \qquad
   \beta := e^{-\mu h_0}\,\frac{(\delta h_0)^{r}}{r!},
   \qquad
   \Gamma := 2\,e^{-\mu h_0}\,e^{h_0 N M}.
\]
Hence $\Psi(t+h_0,t)$ is entrywise strictly positive, and its Hilbert
projective diameter satisfies
\[
   \operatorname{diam}\Psi(t+h_0,t)
   \;\le\; 2\log\frac{\Gamma}{\beta}
   \;=\; 2\log\!\Big(2\,e^{h_0 N M}\,\frac{r!}{(\delta h_0)^{r}}\Big)
   \;=:\; D_0 < \infty,
\]
uniformly in $t$, so $\kappa_0 := \tanh(D_0/4) < 1$ is a valid
per-window Birkhoff contraction bound (the exact coefficient
$\tanh(\operatorname{diam}\Psi/4)$ is no larger, since $D_0$ bounds the
diameter and $\tanh$ is increasing).

\textbf{Proof.} Write $B(t) = -\mu I + Q(t)$ with $Q(t) := B(t) + \mu
I$. As $\mu = \Lambda \ge -B_{ii}(t)$, the diagonal of $Q$ is
nonnegative and its off-diagonal agrees with that of $B$, so $Q(t) \ge
0$ entrywise with entries at most $\mu + \Lambda = M$. The substitution
$\eta = e^{-\mu(t-t_1)}\zeta$ turns $\dot\eta = B\eta$ into $\dot\zeta =
Q\zeta$, giving $\Psi(t_2,t_1) = e^{-\mu(t_2-t_1)}\Phi(t_2,t_1)$, where
$\Phi$ is the transition matrix of $\dot\zeta = Q(t)\zeta$, expanded in
the Peano--Baker series
\[
   \Phi(t_2,t_1) = \sum_{m=0}^{\infty}
   \int_{t_1}^{t_2}\!\!\int_{t_1}^{s_1}\!\!\cdots\!\int_{t_1}^{s_{m-1}}
   Q(s_1)Q(s_2)\cdots Q(s_m)\; ds_m\cdots ds_1 .
\]
Each factor $Q(s)$ is entrywise nonnegative, so every term is entrywise
nonnegative and $\Phi$ dominates any single term.

\emph{Lower bound.} Fix $(i,j)$. If $i = j$, the $m=0$ term is $I$, so
$\Phi(t_2,t_1)_{ii} \ge 1$. If $i \ne j$, note that for the column-vector
dynamics $\dot\eta = Q(t)\eta$ the entry $Q_{ab}$ represents influence
of coordinate $b$ on coordinate $a$, so $[Q(s_1)\cdots Q(s_m)]_{ij}$
sums over directed paths from $j$ to $i$. Choose a shortest such path
$j = i_m \to i_{m-1} \to \cdots \to i_0 = i$ in $G$ (strong connectivity
provides one, of length $m \le r$), with every chosen edge belonging to
$G$ and hence bounded below by $\delta$, so
$Q_{i_{\ell-1} i_\ell}(s) = B_{i_{\ell-1} i_\ell}(s) \ge \delta$ for
a.e.\ $s$. Keeping in the $m$-th term only this single path,
\[
   [\,Q(s_1)\cdots Q(s_m)\,]_{ij}
   \;\ge\; Q_{i_0 i_1}(s_1)\,Q_{i_1 i_2}(s_2)\cdots Q_{i_{m-1} i_m}(s_m)
   \;\ge\; \delta^m ,
\]
and integrating over the ordered simplex, whose volume is
$(t_2-t_1)^m/m!$,
\[
   \Phi(t_2,t_1)_{ij} \;\ge\; \delta^m\,\frac{(t_2-t_1)^m}{m!}.
\]
With $t_2 - t_1 = h_0$ and $\delta h_0 \le 1$, the sequence $m \mapsto
(\delta h_0)^m/m!$ is nonincreasing (successive ratio $\delta
h_0/(m+1) \le 1$), so for $1 \le m \le r$ it is at least $(\delta
h_0)^r/r!$. Thus, for $i \ne j$,
\[
   \Psi(t+h_0,t)_{ij} = e^{-\mu h_0}\Phi(t+h_0,t)_{ij}
   \;\ge\; e^{-\mu h_0}\,\frac{(\delta h_0)^{r}}{r!} = \beta ,
\]
while for $i=j$, $\Psi(t+h_0,t)_{ii} \ge e^{-\mu h_0} \ge \beta$ (since
$(\delta h_0)^r/r! \le 1$). This is the lower bound.

\emph{Upper bound.} A product of $m$ matrices with entries in $[0,M]$
has entries at most $N^{m-1}M^m$: each entry of the product is a sum
over the $N^{m-1}$ choices of intermediate indices $i_1,\dots,i_{m-1}$,
each summand a product of $m$ entries bounded by $M$. So the $(i,j)$
entry of the $m$-th
term is at most $N^{m-1}M^m (t_2-t_1)^m/m!$. The $m=0$ term contributes
at most $1$. Summing,
\[
   \Phi(t+h_0,t)_{ij}
   \;\le\; 1 + \sum_{m=1}^\infty \frac{N^{m-1}M^m h_0^m}{m!}
   \;\le\; 1 + \tfrac1N e^{h_0 N M}
   \;\le\; 2\,e^{h_0 N M},
\]
so $\Psi(t+h_0,t)_{ij} = e^{-\mu h_0}\Phi(t+h_0,t)_{ij} \le \Gamma$.

\emph{Diameter.} For any entrywise positive $P$ with all entries in
$[\beta,\Gamma]$,
\[
   \operatorname{diam}(P) = \max_{i,j,p,q}\log\frac{P_{ip}P_{jq}}{P_{iq}P_{jp}}
   \;\le\; \log\frac{\Gamma\cdot\Gamma}{\beta\cdot\beta}
   = 2\log\frac{\Gamma}{\beta}.
\]
Applied to $P = \Psi(t+h_0,t)$ this gives $D_0$, uniform in $t$, and
by Birkhoff's theorem the per-window contraction ratio is
$\tanh(\operatorname{diam}\Psi/4) \le \tanh(D_0/4) =: \kappa_0 < 1$, so
$\kappa_0$ is a valid uniform contraction bound. $\blacksquare$

\textbf{Composition.} Partition $[0,\tau]$ into $\lfloor \tau/h_0\rfloor$
windows of length $h_0$ (with a shorter remainder, which is
non-expansive on the cone where the Hilbert metric is defined and so
only improves the bound). Since $\wedge^k
D\varphi_\tau(x_0) = \Psi(\tau,0)$ factors as the product of the
corresponding window maps, each contracting the Hilbert metric by
$\kappa_0$,
\[
   d_H\!\big(\wedge^k D\varphi_\tau\,\xi,\ \wedge^k D\varphi_\tau\,\eta\big)
   \;\le\; \kappa_0^{\lfloor \tau/h_0\rfloor}\, d_H(\xi,\eta)
   \;\le\; \kappa_0^{-1}\,e^{-c\tau}\, d_H(\xi,\eta),
\]
that is, exponential decay at rate $c = -h_0^{-1}\log\kappa_0 >
0$ (with prefactor $\kappa_0^{-1}$, not $1$), the estimate used in the
main text. The dependence on the band and
the connectivity is now visible in closed form: $\beta$ carries the
factor $(\delta h_0)^r/r!$, so the rate degrades as the lower band
$\delta$ shrinks or the graph diameter $r$ grows, and improves as they
sharpen.

\section*{Acknowledgment}

In preparing these notes I made use of AI tools for drafting,
literature cross-checking, and iterative editing; the mathematical
content, viewpoint, and any errors are my own.

\end{document}